\documentclass[%
    preprint,
nofootinbib,
amsmath,amssymb,
 aps,
prd,
]{revtex4-2}

\usepackage{amsmath}
\usepackage{amsfonts}
\usepackage{amssymb}
\usepackage{amsthm}
\usepackage{mathrsfs}
\usepackage{mathtools}
\usepackage{natbib}
\usepackage{tikz}
\usepackage{MnSymbol}
\usepackage{slashed}
\usepackage{graphicx}
\usepackage{svg}
\usepackage{orcidlink}
\usepackage{bm}
\usepackage[T2A]{fontenc}
\usepackage[utf8]{inputenc} % Only needed for older engines; modern ones use it by default
\usepackage[russian,english]{babel}

\svgsetup{inkscapelatex=false}

\numberwithin{equation}{section}
\makeatletter
\renewcommand\theequation{\arabic{section}.\arabic{equation}}
\makeatother



\begin{document}

%%% Theorem Style %%%
\newtheoremstyle{theorem}
                 {5pt}
                 {5pt}
                 {}
                 {}
                 {\bfseries}
                 {\newline}
                 {0pt}
                 {}

\theoremstyle{theorem}
\newtheorem{theorem}{Theorem}
\newtheorem{lemma}[theorem]{Lemma}
\newtheorem{definition}[theorem]{Definition}
\newtheorem{proposition}[theorem]{Proposition}
\newtheorem{corollary}[theorem]{Corollary}

\newtheoremstyle{remark}
                 {5pt}
                 {5pt}
                 {}
                 {}
                 {\bfseries}
                 {\newline}
                 {0pt}
                 {}

\theoremstyle{remark}
\newtheorem*{remark}{Remark.}
%%% %%%

\title{Perfect fluids revisited: an action principle approach}

\author{Kostas Tzanavaris\,\orcidlink{0000-0003-0949-7191}}\affiliation{\AEI}\affiliation{\Leibniz}\affiliation{\Edinburgh}
\newcommand*{\Edinburgh}{Higgs Centre for Theoretical Physics, James Clerk Maxwell Building, Edinburgh EH9 3FD, UK}
\newcommand*{\AEI}{Max Planck Institute for Gravitational Physics (Albert Einstein Institute), D-30167 Hannover, Germany}
\newcommand*{\Leibniz}{Leibniz Universit{\"a}t Hannover, D-30167 Hannover, Germany}

\begin{abstract}
We revisit the variational principle for relativistic perfect fluids in a manifestly covariant formulation based on differential forms, with particular attention to the boundary data required for a well-posed action principle. For timelike flows, the formalism is largely a geometric reformulation of the Schutz action principle for perfect fluids. We then analyse the extension of the same variational principle to null flows. In that case, the system is not a generic perfect fluid: the equations of motion force the enthalpy density to vanish, $\rho+P=0$. The resulting stress-energy tensor decomposes into a vacuum energy-like term with variable pressure and a null dust contribution. This shows that the obstruction to the naive fluid extension is dynamical rather than kinematical. Since the matter action is formulated independently of any gravitational field equations, the construction can be generalised to first-order or non-metric theories of gravity.
\end{abstract}

\maketitle

%\tableofcontents

\section{Introduction}

The action principle for relativistic fluids has a long history. Early formulations of Taub \cite{taub1954fluid} and Fock \cite{fock1955teoriya} described the fluid in terms of a congruence of particle worldlines, implementing Hamilton's principle through variations of the individual flow lines. While this Lagrangian viewpoint is physically transparent, it tends to obscure the covariance of the formulation. 

A major step towards a covariant Eulerian description was achieved by Schutz \cite{schutz1970fluid}, who expressed the fluid velocity in terms of a Clebsch-type decomposition, building on earlier work of Seliger and Whitham \cite{seliger1968continuum}. The relation between velocity potential actions, Lagrangian coordinates and constrained variational principles was further clarified in subsequent work by Schutz, Sorkin, Carter, Brown, and others \cite{schutz1977fluid,1991RSPSA.433...45C,brown1993fluid}.

The purpose of the present work is to revisit these formulations from a geometric perspective in which the role of the boundary terms is made explicit. In particular, we formulate the action using differential forms and choose the Lagrangian density so that the variational problem is naturally interpreted as a Dirichlet problem for the velocity potentials and the Lagrangian labels of the flow. For timelike flows, this construction is equivalent to the standard Schutz action principle: the equations of motion reproduce the conservation of particle number and entropy, the Clebsch representation of the fluid velocity, and the usual perfect fluid stress-energy tensor.

The advantage of this formulation is that it separates the kinematical variables of the flow from the thermodynamic variables of the equation of state. In many standard treatments, the particle current is taken as the fundamental variable, with the particle density and 4-velocity recovered from the norm and direction of this current. This is efficient for timelike flows, but it becomes degenerate when the current is null. We instead take the velocity, particle number, and entropy density as the independent variables, subject to constraints.\footnote{Brown’s formulation \cite{brown1993fluid} is closest in spirit to the present construction; the difference here is the use of differential forms to track the boundary variational problem and the choice to keep $u,n,s$ independent so that null flows can be analysed without degenerating the definition of the velocity.} This allows the same variational ansatz to be applied simultaneously for timelike and null flows. In the latter case, we find null flows to be highly restricted. The nullity constraint forces the enthalpy density to vanish: $\rho + P = 0.$ Consequently, the associated stress-energy tensor decomposes into two pieces: a vacuum energy-like contribution with variable pressure and a null dust contribution. Thus, the failure of the naive null extension is not merely a kinematical degeneracy of the usual variables; it is also encoded dynamically in the action principle itself. 

This observation is useful for two reasons. First, it identifies precisely what kind of matter model is obtained when the action principle is formally extended to null flows. Second, because the fluid action is formulated without using the gravitational field equations, the construction can be coupled not only to General Relativity, but also to first-order or non-metric theories of gravity. 

The paper is organised as follows. Section II reviews the thermodynamic and kinematical variables used throughout the work and introduces the Lagrangian coordinates associated with a causal flow. Section III formulates the action principle and derives the matter equations, with particular emphasis on the boundary terms and the corresponding Dirichlet data. Section IV discusses the continuous internal symmetries of the action and the associated conserved currents. Section V derives the stress-energy tensor by varying the co-frame and analyses the timelike and null flows separately. The physical interpretation of null flows is discussed in Section VI. The Appendix contains a proof of the local existence of Lagrangian coordinates and a summary of the differential form identities used in the tetrad formulation of General Relativity.

\vspace{0.5em}

\noindent
\textbf{Notation.} Throughout this paper, the pair $(\mathscr{M},g)$ will denote a four-dimensional Lorentzian manifold. We use the ``mostly positive'' MTW convention for the signature $(-,+,+,+)$. The set of smooth vector fields on $\mathscr{M}$ will be denoted by $\mathfrak{X}(\mathscr{M})$, and the set of smooth differential forms of degree $k$ is denoted by $\Omega^k(\mathscr{M})$. The space of smooth differential forms on $\mathscr{M}$ is denoted by $\Omega(\mathscr{M})$. Given any vector field $X\in\mathfrak{X}(\mathscr{M})$, its metrically equivalent 1-form is defined by $X^\flat=g(X,\cdot)$. Similarly, for every $\omega\in\Omega^1(\mathscr{M})$ we define its metrically equivalent vector field as the (unique) vector field $\omega_\sharp$ satisfying $\omega = g(\omega_\sharp,\cdot)$. Finally, the inverse metric tensor is defined as the tensor field $g^\sharp$ on $T^*\mathscr{M}\otimes T^*\mathscr{M}$ whose components $g^{\alpha\beta}$ in a local frame are the solutions to the equation $g^{\alpha\beta}g_{\beta\gamma} = {\delta^\alpha}_\gamma$.

\section{Perfect fluids}
\label{sec:perfect-fluids}

In this section, we outline the basic properties of perfect fluids, following \cite{Andersson:2006nr}. The independent variables describing the fluid under study are taken to be the particle number density and entropy density.

\subsection{Thermodynamics}
\label{sec:thermodynamics}

Perfect fluids are idealized thermodynamic systems, characterized by zero viscosity and shear stresses. For simplicity, we consider a one-component fluid. We further assume that the fluid's motion under consideration is described by a quasi-static process. Its thermodynamic state is then described by the 1st law of thermodynamics
\begin{equation}
    \label{eqn:1st-law-ext}
    dE = T\,dS - P\,dV + \mu\,dN,
\end{equation}
where $E$ is the internal energy, $T$ is its temperature, $P$ is its pressure, $V$ is the volume of the parcel under consideration, $\mu$ is the chemical potential, and $N$ is the total particle number. The equation of state for this system $E=E(S,V,N)$ is homogeneous, and therefore these variables are related by the Gibbs-Duhem relation (\textit{c.f.} \cite{LandauLifshitzStatPhys1}, $\S$24): 
\begin{equation}
    \label{eqn:gibbs-duhem-ext}
    E=TS-PV + \mu N. 
\end{equation}
One often describes the thermodynamic state of perfect fluids in terms of the \textit{energy, entropy and particle number densities}, defined by 
\begin{equation}
    \rho = E/V, \quad s = S/V, \quad n =N/V,
\end{equation}
respectively. In terms of these parameters, the relations \eqref{eqn:1st-law-ext} and \eqref{eqn:gibbs-duhem-ext} reduce to 
\begin{subequations}
\begin{align}
    d\rho &= T\,ds + \mu\,dn,
    \label{eqn:drho-maxwell}\\
    dP &= s\,dT + n\,d\mu, \\
    h &= \rho + P = Ts + \mu n,
\end{align}
\end{subequations}
where $h=\rho + P$ is the \textit{enthalpy density} of the system.

In much of the literature, the thermodynamic state is often described using the particle density $n$ together with the specific entropy $s/n$. This parametrisation is natural when the particle current is timelike. In the present work, we instead use $n$ and $s$ as independent scalar variables. The choice is equivalent for timelike flows, but it is better adapted to the variational problem for null flows considered below, where dividing by $n$ or recovering it from the norm of a current may become degenerate. 

\subsection{Fluid kinematics}
\label{sec:kinematics}

There are two complementary ways of describing the kinematics of a fluid. In the \textit{Eulerian description} one assigns to each event $p\in \mathscr{M}$ a vector $u_p\in T_p\mathscr{M}$, representing the velocity of a fluid parcel at $p$. For an ordinary relativistic fluid, $u$ is future-directed and unit timelike. In the present work, we allow $u$ to be causal and non-zero, with either the unit timelike normalization condition or the nullity condition imposed later as a constraint in the action.

The integral curves of $u$ are the \textit{flow lines} of the fluid, and represent the worldlines of individual parcels comprising the fluid. In the \textit{Lagrangian} or \textit{co-moving description}, one labels these flow lines by three scalar fields $\textrm{a}^1,\textrm{a}^2,\textrm{a}^3$. These labels are constant along the flow, \textit{i.e.}, $\pounds_u\textrm{a}^i = 0$ for all $i=1,2,3$. Locally, one may supplement these labels by a parameter $\tau$ along the flow satisfying $\pounds_u\tau = 1$. For timelike flows, this parameter may be chosen to coincide with the proper time of the fluid parcels. 

\begin{theorem}[Existence of Lagrangian coordinates]
\label{thm:lag-existence}
Suppose $(\mathscr{M},g)$ is a four-dimensional space-time, $u$ is a $C^1$ causal and non-zero vector field on $\mathscr{M}$, and $x^1,x^2,x^3$ are coordinates of a sufficiently small spacelike hypersurface $S\subset\mathscr{M}$. Then, the functions $\textrm{a}^1,\textrm{a}^2,\textrm{a}^3,\tau$ defined by 
\begin{equation}
    \label{eqn:lag-cord-def}
    \pounds_u\textrm{a}^i=0,\quad \pounds_u\tau =1,
    \quad 
    1\leq i\leq 3,
\end{equation}
where $\pounds_u$ is the Lie derivative along $u$, form a local coordinate system of $\mathscr{M}$ containing $S$.
\end{theorem}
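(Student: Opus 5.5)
The plan is to reduce the statement to the standard flow-box construction, using the flow of $u$ and the transversality of a causal vector field to a spacelike hypersurface. Let $\phi_t$ denote the local flow of $u$. Since $u$ is $C^1$, Picard--Lindel\"of gives a flow $(t,p)\mapsto\phi_t(p)$ that is $C^1$ jointly in $(t,p)$ on an open neighbourhood of $\{0\}\times S$ in $\mathbb{R}\times\mathscr{M}$. After shrinking $S$ (this is where ``sufficiently small'' is used), we may take the domain to be $(-\epsilon,\epsilon)\times S$. We then define the map
\begin{equation*}
    \Phi:(-\epsilon,\epsilon)\times S\to\mathscr{M},\qquad \Phi(t,x^1,x^2,x^3)=\phi_t\bigl(x^1,x^2,x^3\bigr),
\end{equation*}
where the point of $S$ is identified with its coordinates $x^i$. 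The candidate Lagrangian coordinates are the components of $\Phi^{-1}$, namely $\textrm{a}^i=x^i\circ\Phi^{-1}$ and $\tau=t\circ\Phi^{-1}$.

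The key step is to show that $\Phi$ is a local diffeomorphism near $\{0\}\times S$. At $t=0$ we have $d\Phi(\partial_t)=u_p$ and $d\Phi(\partial_{x^i})=\partial_{x^i}|_p\in T_pS$. So $d\Phi_{(0,p)}$ is invertible if and only if $u_p\notin T_pS$. This is the one place where causality of $u$ enters. Since $S$ is spacelike, every nonzero vector of $T_pS$ is spacelike, whereas $u_p$ is nonzero and causal, so $g(u_p,u_p)\le 0$. Hence $u_p\notin T_pS$, and $u_p$ together with $T_pS$ spans $T_p\mathscr{M}$. This argument covers both the null and the timelike case, which is exactly the generality the paper needs. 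By the inverse function theorem in $C^1$, $\Phi$ is a $C^1$ diffeomorphism from a neighbourhood of $(0,p)$ onto a neighbourhood of $p$. To obtain a single chart containing all of $S$, we shrink $S$ to a small, relatively compact coordinate patch. On such a patch $\Phi$ is also injective on $(-\epsilon,\epsilon)\times S$ for small $\epsilon$; one way to see this is the standard tubular-neighbourhood argument, another is that an integral curve of a causal field crosses a small spacelike (hence achronal-in-the-small) hypersurface at most once locally. The image is then an open set $U\supset S$ on which $(\textrm{a}^i,\tau)$ are coordinates.

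It remains to verify \eqref{eqn:lag-cord-def}. Because $\phi_s\circ\phi_t=\phi_{s+t}$, we have $\Phi^{-1}(\phi_s(q))=(t+s,x)$ whenever $\Phi^{-1}(q)=(t,x)$. So along each integral curve of $u$ the functions $\textrm{a}^i$ are constant and $\tau$ increases with unit rate. Differentiating in $s$ at $s=0$ gives $\pounds_u\textrm{a}^i=u(\textrm{a}^i)=0$ and $\pounds_u\tau=1$. These functions also satisfy the initial conditions $\textrm{a}^i|_S=x^i$ and $\tau|_S=0$. Equivalently, $d\Phi(\partial_t)=u$ everywhere, so in the new chart $u=\partial_\tau$. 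Uniqueness of solutions of these first-order transport equations with data on the non-characteristic surface $S$ shows that these are the functions meant in the statement.

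I expect the main obstacle to be the passage from a local diffeomorphism near each point to a chart containing all of $S$, i.e.\ the injectivity of $\Phi$. I would handle it by interpreting ``sufficiently small'' as allowing $S$ to be shrunk to a relatively compact patch. I would then argue by contradiction: suppose there are sequences $(t_n,x_n)\neq(t'_n,x'_n)$ with equal images and $\epsilon_n\to0$. Passing to limits gives equal points of $S$, which contradicts local injectivity of $\Phi$ near $\{0\}\times S$. A secondary point is the regularity: $u\in C^1$ only yields $C^1$ coordinates. This is sufficient for a coordinate system, but it should be stated.
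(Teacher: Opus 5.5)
Your proposal is correct and follows essentially the same route as the paper's proof in the Appendix. Both construct the map $(\tau,p)\mapsto\phi_\tau(p)$ on $(-\varepsilon,\varepsilon)\times S$, get invertibility of its differential at $\tau=0$ from the transversality of the non-zero causal $u_q$ to the spacelike $T_qS$, apply the inverse function theorem, and use $\phi_{\tau+s}=\phi_\tau\circ\phi_s$ to obtain $\pounds_u\tau=1$ and $\pounds_u\textrm{a}^i=0$. The only difference is that the paper works near a single point $q\in S$ and shrinks $S$ around it, whereas your compactness and injectivity argument gives one chart containing the whole relatively compact shrunk $S$, which matches the phrase ``containing $S$'' slightly more faithfully.
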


Theorem \ref{thm:lag-existence} is quite similar to the \textit{rectification theorem for vector fields} \cite{arnold1992odes} which states that given a vector field $u$ that is non-zero in an open neighbourhood $\mathscr{U}$, one can always construct a local coordinate system $(\mathscr{V},x)$ in $\mathscr{U}$ such that $u=\partial_{x^1}$ in $\mathscr{V}$. Eq. \eqref{eqn:lag-cord-def} is simply the statement that the Lagrangian coordinates remain constant along the flow of $u$. For a non-relativistic fluid in a local rest frame $(t,x^i)$, we have $u^0\simeq 1$ and eq. \eqref{eqn:lag-cord-def} becomes the \textit{transport equation}
\begin{equation}
    \left(\partial_t+ u^j\partial_{x^j}\right)\textrm{a}^i = 0, 
    \quad 
    \tau = t, \quad 1\leq i\leq 3.
    \label{eqn:transport}
\end{equation}

\section{The action principle for perfect fluids}
\label{sec:action-principle}

\subsection{Formulation of the action principle}

We now formulate the action principle for fluids along the lines of Fock and Taub. Recall that the action principle for a massive particle of mass $m$ is $\mathcal{S}_\textrm{particle}[\gamma] = -m\int_\gamma d\tau$, where $\tau$ is the particle's proper time, and $\gamma$ is a compact segment of the particle's worldline. Then, assuming that the fluid is inviscid, its bare action must be a formal sum of the individual actions of the particles comprising the fluid. Hence, the bare Lagrangian density $\mathcal{L}_0$ of the fluid is given by
\begin{equation}
    \mathcal{L}_0=-\rho\star 1. 
\end{equation}
When one varies the corresponding action, one must do so by keeping the worldlines of the individual particles fixed. This is done by defining (locally) Lagrangian coordinates $\textrm{a}^1,\textrm{a}^2,\textrm{a}^3$ and keeping them fixed under variations, imposing the constraint $\pounds_u \textrm{a}^i = 0$, $1\leq i \leq 3,$ in the variational problem. An equivalent formulation, using differential forms, is 
\begin{equation}
    \label{eqn:constr-1-vel}
    d\textrm{a}^i\wedge\star u^\flat = \pounds_u\textrm{a}^i\star 1 = 0, \quad 1\leq i\leq 3.
\end{equation}
Similarly, we impose conservation of particle number and entropy. Again, in terms of differential forms, these conservation laws are 
\begin{equation}
    \label{eqn:constr-2-vel}
    \textrm{div}\,(nu^\flat) = 0,\quad  \textrm{div}\,(su^\flat) = 0,
\end{equation}
where $\textrm{div}:\Omega(\mathscr{M})\to\Omega(\mathscr{M})$ is the divergence operator defined by 
\begin{equation}
    \textrm{div}\,\omega = (-1)^{k-1}\star^{-1} d \star \omega,
    \quad \omega\in\Omega^k(\mathscr{M}). 
\end{equation}
Lastly, we impose the additional constraint regarding the parametrisation of the velocity of the fluid: we assume that $u$ is either unit timelike or null: 
\begin{equation}
    g(u,u) = c,\quad c=0,-1. 
\end{equation}
Thus, under these constraints, the total Lagrangian density of the fluid is given by 
\begin{equation}
    \label{eqn:lag-dens-off-shell}
    \mathcal{L} = -\left[\star (\rho + \mathcal{A}) + \mathcal{C}\wedge\star u^\flat \right],
\end{equation}
where 
\begin{subequations}
\begin{align}
    \mathcal{A} & = \frac{1}{2}f\left(g(u,u) - c\right), \label{eqn:affine-param}\\
    \mathcal{C} & = n\,d\Phi + s\,d\Theta + b_i\,d\textrm{a}^i,
\end{align}    
\end{subequations}
are the terms related to the constraints imposed on the fluid, and $f,\Phi,\Theta,b_i$ are the Lagrange multipliers.

\subsection{Implications of the action principle}

As far as the equations of motion are concerned, the action \eqref{eqn:lag-dens-off-shell} is defined only up to boundary terms. However, boundary terms can affect the action principle by determining which fields (or combinations thereof) are fixed on the boundary. On this basis, we argue that the action \eqref{eqn:lag-dens-off-shell} is the appropriate action for perfect fluids. The non-metric variations of $\mathcal{L}$, which are rather straightforward to calculate, are 
\begin{subequations}
\begin{align}
    \delta_u\mathcal{L} & = -(fu^\flat + \mathcal{C})\wedge\star\delta u^\flat, \label{eqn:euler-first}\\
    \delta_n\mathcal{L} & = -(\mu +\pounds_u\Phi)\star 1, \\
    \delta_s\mathcal{L} & = -(T + \pounds_u\Theta)\star 1, \\
    \delta_{\mathbf{a}}\mathcal{L} & = \textrm{div}(b_i u^\flat) \delta\textrm{a}^i\star 1 - d(b_i\delta\textrm{a}^i\star u^\flat),\\
    \delta_\Phi\mathcal{L} & = \textrm{div}(nu^\flat)\delta\Phi\star 1 - d(n\delta\Phi\star u^\flat), \\
    \delta_\Theta\mathcal{L} & = \textrm{div}(su^\flat)\delta\Theta\star 1 - d(s\delta\Theta\star u^\flat),\\
    \delta_{\bm{b}}\mathcal{L} & = -\delta b_i\,\pounds_u a^i\star 1, \\
    \delta_f\mathcal{L} & = -\frac{1}{2}\delta f\,(g(u,u) - c)\star 1. \label{eqn:euler-last}
\end{align}
\end{subequations}
Now, consider the action principle in a bounded region $\mathscr{D}\subset\mathscr{M}$, which states that 
\begin{equation}
    \delta\int_{\mathscr{D}}\mathcal{L} = 0,
\end{equation}
for all non-metric variations. Keeping the bounded region $\mathscr{D}$ fixed, and working with variations supported in the interior of $\mathscr{D}$, we get the bulk equations
\begin{subequations}
\begin{align}
    & fu^\flat = - \mathcal{C}, && g(u,u) = c, 
    \label{eqn:velocity-rel}\\
    & \mu = -\pounds_u\Phi, && T = -\pounds_u\Theta, 
    \label{eqn:therm-pot}\\
    & \textrm{div}(nu^\flat) = 0, &&
    \textrm{div}(su^\flat) = 0,
    \label{eqn:cons-1}\\
    & \pounds_u\textrm{a}^i = 0, && \textrm{div}(b_iu^\flat) = 0, && 1\leq i\leq 3,
    \label{eqn:cons-2}
\end{align}
Using \eqref{eqn:therm-pot} in conjunction with the Gibbs-Duhem relation, the relations \eqref{eqn:velocity-rel} reduce to
\begin{equation}
    \label{eqn:enthalpy-constraint}
    h = \rho + P = cf. 
\end{equation}
\end{subequations}
Now, consider arbitrary compactly supported variations that do not vanish at the boundary $\partial\mathscr{D}$. The action principle for such variations necessitates that one fixes the boundary data for $\Phi,\Theta$ and $\textrm{a}^i$, or equivalently, setting the boundary variations to zero, 
\begin{equation}
    \delta\Phi\rvert_{\partial\mathscr{D}} = \delta\Theta\rvert_{\partial\mathscr{D}} = 
    \delta\textrm{a}^i\rvert_{\partial\mathscr{D}} = 0.
\end{equation}
In other words, the action principle for the Lagrangian density \eqref{eqn:lag-dens-off-shell} is equivalent to a Dirichlet-type problem for the potentials $\Phi,\Theta$, and fixing the positions of the fluid parcels at the boundary. 

\section{Symmetries of the action}
\label{sec:action-symmetries}

The action contains a redundancy in the parametrisation of the velocity 1-form $u^\flat$. The physical fields $u,n,s$ are determined by the equations of motion, while the potentials $\Phi,\Theta$, the Lagrangian labels $\textrm{a}^i$ and the multipliers $b_i$ can only enter through the 1-form $\mathcal{C}$. Consequently, any field redefinition that leaves $n,s$ and $u$ invariant is an internal symmetry of the action. We consider infinitesimal and ultralocal transformations of the variables $\Phi,\Theta,\textrm{a}^i$ and $b_i$, leaving $n$ and $s$ fixed. Let $X$ be the corresponding vector field on this internal space,  
\begin{equation}
    X = X^{\Phi}\frac{\partial}{\partial \Phi}
    + X^\Theta\frac{\partial}{\partial \Theta}
    + X^{b_i}\frac{\partial}{\partial b_i}
    + X^{\textrm{a}^i}\frac{\partial}{\partial \textrm{a}^i}.
\end{equation}
For $X$ to generate a symmetry of the system, its flow preserves $\mathcal{C}$:
\begin{equation}
    \label{eqn:redef-cond}
    \pounds_X\mathcal{C} = 0. 
\end{equation}
Define 
\begin{equation}
    \mathcal{F} = i_X\mathcal{C} = n X^{\Phi} + s X^{\Theta} + b_i X^{\textrm{a}^i}
\end{equation}
as the generating function of the redefinition. Then, \eqref{eqn:redef-cond} reduces to 
\begin{equation}
    \label{eqn:redef-cond-2}
    d\mathcal{F} - X^\Phi\,dn - X^\Theta\,ds + X^{b_i}\,d\textrm{a}^i - X^{\textrm{a}^i}\,db_i = 0.
\end{equation}
Expanding $d\mathcal{F}$ and comparing coefficients, we obtain
\begin{equation}
    \partial_\Phi\mathcal{F} = \partial_\Theta\mathcal{F} = 0,
\end{equation}
and therefore $\mathcal{F}=\mathcal{F}(n,s,b_i,\textrm{a}^i)$. The components of the generator $X$ are then given by
\begin{equation}
    X^\Phi = \frac{\partial \mathcal{F}}{\partial n},\;
    X^\Theta = \frac{\partial \mathcal{F}}{\partial s},\;
    X^{\textrm{a}^i} = \frac{\partial \mathcal{F}}{\partial b_i},\;
    X^{b_i} = -\frac{\partial \mathcal{F}}{\partial \textrm{a}^i}.
\end{equation}
Hence the infinitesimal symmetries of the action are
\begin{subequations}
\begin{align}
    \delta\Phi & := X^{\Phi} = \partial_n\mathcal{F},\\
    \delta\Theta & := X^{\Theta} = \partial_s\mathcal{F},\\
    \delta\textrm{a}^i & := X^{\textrm{a}^i} = \partial_{b_i}\mathcal{F},\\
    \delta b_i & := X^{b_i} = -\partial_{\textrm{a}^i}\mathcal{F},
\end{align}
\end{subequations}
with $\delta n=\delta s=0$.

Given a vector field $X$ that generates a redefinition of the fluid variables, the induced on-shell variation of the Lagrangian density, in conjunction with \eqref{eqn:euler-first}-\eqref{eqn:euler-last}, is 
\begin{equation}
    \delta_X\mathcal{L} = -d(\mathcal{F}\star u^\flat)
\end{equation}
Consequently, the conserved Noether current is 
\begin{equation}
    J_X = \mathcal{F}\star u^\flat. 
\end{equation}

These symmetries express the freedom to relabel the internal potentials without changing the physical velocity field. They are the covariant analogue of particle relabelling symmetries in the Lagrangian description of fluids. Different choices of the generating function $\mathcal{F}$ give different advected charges, all transported along the flow by the conserved current $\mathcal{F}\star u^\flat$.

\section{The stress-energy tensor}
\label{sec:stress-energy}

\subsection{Derivation of the stress-energy tensor}

We now derive the stress-energy tensor associated with the fluid action. Instead of varying the metric directly, we vary a local co-frame, keeping the components of the metric tensor and independent matter variables fixed. Let $\{e_\alpha\}$ be a local frame on $\mathscr{M}$, and denote by $\{e^\alpha\}$ the corresponding dual co-frame defined by $\langle e^\alpha,e_\beta\rangle = {\delta^\alpha}_\beta$. The the metric tensor expanded in this frame is then given by $g = g_{\alpha\beta}\,e^\alpha\otimes e^\beta$. Under a variation $\{\delta e^\alpha\}$ of the tetrad $\{e^\alpha\}$ given by 
\begin{equation}
    \delta e^\alpha = {f^\alpha}_\beta\, e^\beta,
\end{equation}
where ${f^\alpha}_\beta$ are compactly supported smooth functions, the induced variations of the metric tensor are 
\begin{equation}
    \delta_{\mathbf{e}} g = g_{\alpha\beta}\,\delta e^{(\alpha}\otimes e^{\beta)} = 
    \delta_{\mathbf{e}} g_{\alpha\beta}\,e^\alpha\otimes e^\beta,
\end{equation}
where 
\begin{align}
    & g_{\alpha\beta}:= g(e_\alpha,e_\beta),\quad 
    \delta g_{\alpha\beta} := g_{\alpha\mu} {f^\mu}_\beta + g_{\beta\mu}{f^\mu}_\alpha. 
    \label{eqn:delta-g-comp}
\end{align}
A brief review of how one calculates variations in this framework is given in Appendix \ref{sec:tetrad-formalism}. We shall study the variations of the fluid with respect to the co-frame, rather than the metric, as this allows us to study the coupling of perfect fluids to non-metric/1st order theories of gravity, such as Einstein-Palatini gravity, where both the tetrad and the connection are free.

\begin{theorem}[The stress-energy tensor]
Under a variation of the tetrad which keeps all other parameters fixed, the on-shell variation of the fluid's Lagrangian density is 
\begin{equation}
    \label{eqn:def-stress-energy-var}
    \delta_{\mathbf{e}}\mathcal{L} = T_{\alpha\beta}\,\delta e^\alpha\wedge\star e^\beta, 
\end{equation}
where
\begin{equation}
    T_{\alpha\beta} = P g_{\alpha\beta} - fu_\alpha u_\beta.
\end{equation}
\end{theorem}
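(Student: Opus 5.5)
We vary the co-frame $\{e^\alpha\}$ while holding fixed the metric components $g_{\alpha\beta}$, the vector field $u$ (so its frame components $u^\alpha$ are fixed), and the scalars $n,s,\Phi,\Theta,\textrm{a}^i,b_i,f$. We then use the bulk equations \eqref{eqn:velocity-rel}--\eqref{eqn:cons-2} to simplify the result. The Lagrangian \eqref{eqn:lag-dens-off-shell} has three pieces, and we treat them in turn:
\[
-\rho\star 1,\qquad -\mathcal{A}\star 1,\qquad -\mathcal{C}\wedge\star u^\flat .
\]

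For the first two pieces, $\rho=\rho(n,s)$ and $\mathcal{A}=\tfrac12 f(g_{\alpha\beta}u^\alpha u^\beta-c)$ are scalars built only from fixed data, so they are unchanged by the variation. Only the volume form changes. Using the identity from Appendix \ref{sec:tetrad-formalism},
\[
\delta_{\mathbf{e}}\star 1=\delta e^\alpha\wedge\star e_\alpha
= g_{\alpha\beta}\,\delta e^\alpha\wedge\star e^\beta ,
\]
we obtain the contribution $-(\rho+\mathcal{A})\,g_{\alpha\beta}\,\delta e^\alpha\wedge\star e^\beta$. On shell the constraint $g(u,u)=c$ holds, so $\mathcal{A}=0$, and this contribution reduces to $-\rho\,g_{\alpha\beta}\,\delta e^\alpha\wedge\star e^\beta$.

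The third piece is the main obstacle. Here $\mathcal{C}$ is built from exact differentials of fixed scalars, so $\mathcal{C}$ itself does not vary. The 3-form $\star u^\flat=u^\alpha\star e_\alpha$, however, does vary, since $u^\flat=u_\alpha e^\alpha$ and the Hodge star both depend on the frame. Equivalently, $\star u^\flat=i_u(\star 1)$, and since $u$ is fixed,
\[
\delta(\star u^\flat)=i_u\,\delta(\star 1)=i_u\big(\delta e^\alpha\wedge\star e_\alpha\big).
\]
Expanding the interior product gives
\[
\delta(\star u^\flat)=u^\beta f^\alpha{}_\beta\,\star e_\alpha-\delta e^\alpha\wedge i_u\star e_\alpha .
\]
We wedge this with $\mathcal{C}$ and insert the on-shell relation $\mathcal{C}=-fu^\flat$. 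The relevant wedge products are
\[
u^\flat\wedge\star e_\alpha=u_\alpha\star 1,\qquad
u^\flat\wedge\delta e^\alpha\wedge i_u\star e_\alpha=-\delta e^\alpha\wedge u^\flat\wedge i_u\star e_\alpha .
\]
The last expression can be rewritten with the Leibniz rule,
\[
u^\flat\wedge i_u\star e_\alpha=g(u,u)\star e_\alpha-i_u(u^\flat\wedge\star e_\alpha)=c\star e_\alpha-u_\alpha\star u^\flat ,
\]
where the final equality uses $u^\flat\wedge\star e_\alpha=u_\alpha\star1$ together with $i_u\star 1=\star u^\flat$. Collecting terms, $-\mathcal{C}\wedge\delta\star u^\flat$ should reduce to
\[
-f u_\alpha u_\beta\,\delta e^\alpha\wedge\star e^\beta \;-\; cf\,g_{\alpha\beta}\,\delta e^\alpha\wedge\star e^\beta .
\]
This is the point where sign and ordering conventions must be tracked carefully against the appendix identities. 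As a sanity check, we will verify the result directly in components: compute $\delta\big(u_\alpha\sqrt{|g|}\big)$ for a metric variation $\delta g_{\alpha\beta}$ at fixed $u^\alpha$.

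Finally, we add the pieces and use the enthalpy relation \eqref{eqn:enthalpy-constraint}, $cf=\rho+P$. The coefficient of $g_{\alpha\beta}\,\delta e^\alpha\wedge\star e^\beta$ becomes
\[
-\rho+(\rho+P)=P
\]
once the signs from the $\mathcal{C}$ term are fixed, and what remains is $T_{\alpha\beta}=Pg_{\alpha\beta}-fu_\alpha u_\beta$. Symmetry of $T_{\alpha\beta}$ shows that only the symmetric part of $f^\alpha{}_\beta$ contributes, consistent with \eqref{eqn:delta-g-comp}. Variations of $n,s,\Phi,\Theta,\textrm{a}^i,b_i$ contribute nothing, because they are held fixed. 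Any exact terms are absent, because the frame enters algebraically and no derivatives of the frame appear.
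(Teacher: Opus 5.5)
\noindent There is a genuine gap: you treat $u$ inconsistently, so the $-fu_\alpha u_\beta$ term has no valid source in your argument. Under a frame variation you cannot hold both the vector field $u$ and its frame components fixed, because $\delta u^\alpha=\delta\langle e^\alpha,u\rangle=i_u\delta e^\alpha$.

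In the $\mathcal{A}$-piece you use fixed $u^\alpha$ and conclude $\delta\mathcal{A}=0$. In the $\mathcal{C}$-piece you use fixed $u$ and write $\delta(\star u^\flat)=i_u\,\delta(\star 1)$. With $u$ and $\mathcal{C}$ fixed, however, $\mathcal{C}\wedge\star u^\flat=\langle\mathcal{C},u\rangle\star 1$ has a frame-independent coefficient. Hence $\mathcal{C}\wedge\delta(\star u^\flat)=\langle\mathcal{C},u\rangle\,\delta\star 1=-cf\,g_{\alpha\beta}\,\delta e^\alpha\wedge\star e^\beta$ on shell.

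Your own expansion gives the same result if you carry it through. In $\mathcal{C}\wedge\delta(\star u^\flat)$, the term $-fu_\alpha u_\beta\,\delta e^\alpha\wedge\star e^\beta$ from $(i_u\delta e^\alpha)\star e_\alpha$ cancels exactly against the $+fu_\alpha u_\beta\,\delta e^\alpha\wedge\star e^\beta$ from the $-u_\alpha\star u^\flat$ part of your Leibniz identity (use $\delta e^\alpha\wedge\star u^\flat=u_\beta\,\delta e^\alpha\wedge\star e^\beta$). So $-\mathcal{C}\wedge\delta\star u^\flat=+cf\,g_{\alpha\beta}\,\delta e^\alpha\wedge\star e^\beta$. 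This differs from your claimed $-fu_\alpha u_\beta-cf\,g_{\alpha\beta}$ both in the $u_\alpha u_\beta$ term and in the sign of the $g_{\alpha\beta}$ term. The $g_{\alpha\beta}$ coefficient then comes out as $-\rho+cf=P$ with no sign repair needed. But carrying out your plan correctly yields only $T_{\alpha\beta}=Pg_{\alpha\beta}$.

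\noindent The paper holds $u$ fixed as a vector field throughout. Then $g(u,u)=g_{\alpha\beta}\langle e^\alpha,u\rangle\langle e^\beta,u\rangle$ does vary. Although $\mathcal{A}=0$ on shell, its variation does not vanish: $\delta\mathcal{A}\star 1=fu_\alpha(i_u\delta e^\alpha)\star 1=fu_\alpha u_\beta\,\delta e^\alpha\wedge\star e^\beta$. That is the source of $-fu_\alpha u_\beta$ in $T_{\alpha\beta}$, while the $\mathcal{C}$-term contributes only through $\delta\star 1$ as above. Your proposed component check, with $u^\alpha$ fixed and $g_{\alpha\beta}$ varied, would have exposed this, since there $g_{\alpha\beta}u^\alpha u^\beta$ varies.

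Alternatively, you can keep the frame components $u^\alpha$ fixed consistently. Then $\delta\mathcal{A}=0$ genuinely, but $u=u^\alpha e_\alpha$ moves with $\delta u=-(i_u\delta e^\alpha)\,e_\alpha$. On shell, $-\langle\mathcal{C},\delta u\rangle\star 1=-fu_\alpha u_\beta\,\delta e^\alpha\wedge\star e^\beta$ supplies the missing term.

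The two consistent conventions differ by a variation of $u$. This is harmless because $\delta_u\mathcal{L}=-(fu^\flat+\mathcal{C})\wedge\star\delta u^\flat$ vanishes on shell. That cancellation, however, requires the $\mathcal{A}$- and $\mathcal{C}$-contributions to be taken together, which is exactly what mixing the two conventions destroys.
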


\begin{proof}
As all physical variables are fixed under the variation of the tetrad, we have
\begin{subequations}
\begin{align}
    & \delta_{\mathbf{e}}\rho = 0, \\
    & \delta_{\mathbf{e}}\star 1 = g_{\alpha\beta}\, \delta e^\alpha\wedge\star e^\beta, \\
    & \delta_{\mathbf{e}}\mathcal{C}\wedge\star u = \delta_{\mathbf{e}}\langle\mathcal{C},u\rangle = 0.
\end{align}
\end{subequations}
Furthermore, 
\begin{equation}
    \star\delta_{\mathbf{e}}\mathcal{A} = \star f\,\delta g(u,u)
    =u_\alpha\star i_u\delta e^\alpha.
\end{equation}
On the other hand, 
\begin{align}
    \star i_u\delta e^\alpha  = i_u(\underbrace{\textrm{Vol}_g\wedge\delta e^\alpha}_{=0}) &- (i_u\textrm{Vol}_g)\wedge\delta e^\alpha
    \nonumber\\
    & = g_{\alpha\beta} \delta e^\alpha\wedge\star e^\beta,
\end{align}
which in turn yields 
\begin{equation}
    \star\delta_{\mathbf{e}}\mathcal{A} = fu_\alpha u_\beta\,\delta e^\alpha\wedge\star e^\beta.
\end{equation}
The variation is then calculated by simple substitution and the chain rule.
\end{proof}

\subsection*{Case 1: $u$ is timelike}
When $c=-1$, \textit{i.e.}, when $u$ is unit timelike. Then, the constraint \eqref{eqn:enthalpy-constraint} yields the Lagrange multiplier 
\begin{equation}
    f = -h = -(\rho + P),
\end{equation}
and the stress-energy tensor assumes the usual form
\begin{equation}
    T_{\alpha\beta} = P g_{\alpha\beta} + (\rho + P) u_\alpha u_\beta. 
\end{equation}

\subsection*{Case 2: $u$ is null}
When $c=0$, which corresponds to $u$ being null, \eqref{eqn:enthalpy-constraint} constrains the equation of state to vanishing enthalpy:
\begin{equation}
    \rho + P =0.
\end{equation}
At the same time, no constraint is imposed on the Lagrange multiplier $f$. The stress-energy tensor is then given by 
\begin{equation}
    T_{\alpha\beta} = P g_{\alpha\beta} - fu_\alpha u_\beta. 
\end{equation}

\subsection{The contracted Bianchi identities}

In a purely metric theory with Levi-Civita connection, the matter equations are often recast as the covariant conservation of the stress-energy tensor. In a first-order or non-metric theory, however, the corresponding identity takes a modified form, determined by diffeomorphism invariance rather than by the Levi-Civita contracted Bianchi identity.

In his seminal article on the foundations of physics, David Hilbert provided a derivation of the contracted Bianchi identities using only the principle of general covariance \cite{Hilbert2007}. This derivation was later realised by Emmy Noether \cite{Noether:1918zz} as a consequence of the invariance of the action under coordinate transformations. Consider a Lagrangian density $\mathcal{L}$ of the matter fields, collectively described by a tensor field $\Phi$, that is generally covariant, in the sense that 
\begin{equation}
    f^*\mathcal{L}[e^\alpha,\omega,\Phi] = \mathcal{L}[f^*e^\alpha,f^*\omega,f^*\Phi],
\end{equation}
where $e$ and $\omega$ are the co-frame and connection forms, respectively, and  $f:\mathscr{D}\to\mathscr{D}$ is a diffeomorphism of an open region $\mathscr{D}\subset\mathscr{M}$, representing a change of coordinates in $\mathscr{D}$. Then, the action functional is invariant under the action of $f$, and the corresponding conserved currents constitute the stress-energy tensor. We now derive the contracted Bianchi identities for matter from general covariance, in a form adapted to affine connections with torsion and non-metricity.

\begin{theorem}[The contracted Bianchi identities]
\label{thm:eom-gen}
Let $\mathscr{M}$ be a $n$-dimensional manifold equipped with an affine connection $D$, and let $\mathcal{L}$ denote the Lagrangian density of the matter fields, collectively represented by a tensor field $\Phi$. We further assume that the stress-energy tensor of the theory is well-defined, in the sense that there is a symmetric $(0,2)$-tensor field $T$ on $\mathscr{M}$ such that 
\begin{equation}
    \delta_{\mathbf{e}}\mathcal{L} = T_{\alpha\beta}\,\delta e^\alpha\wedge\star e^\beta
\end{equation}
for every variation of a co-frame $\{e^\alpha\}$ supported in any sufficiently small region of $\mathscr{M}$. Then, 
\begin{equation}
    \label{eqn:gen-eq-motion}
    D_\alpha{T_\mu}^\alpha = \frac{1}{2}T^{\alpha\beta} D_\mu g_{\alpha\beta} + {S^\alpha}_{\beta\mu}{T_\alpha}^\beta.
\end{equation}
\end{theorem}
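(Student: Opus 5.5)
The plan is to derive the identity from diffeomorphism invariance of the matter action, in the style of Hilbert and Noether. Take a vector field $\xi$ compactly supported in a small region $\mathscr{D}$ and let $f_t$ be its flow. General covariance gives $f_t^*\mathcal{L}[e,\omega,\Phi] = \mathcal{L}[f_t^*e,f_t^*\omega,f_t^*\Phi]$. Differentiating at $t=0$ yields $\pounds_\xi\mathcal{L} = \delta_{\mathbf{e}}\mathcal{L}|_{\delta e=\pounds_\xi e} + \delta_\omega\mathcal{L} + \delta_\Phi\mathcal{L}$. Since $\mathcal{L}$ is a top form, $\pounds_\xi\mathcal{L} = d(i_\xi\mathcal{L})$, which integrates to zero over $\mathscr{D}$.

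On shell the matter variations $\delta_\Phi\mathcal{L}$ are exact, so they also integrate to zero because $\xi$ is compactly supported. The connection is regarded as an independent background with no matter dependence on it; in the fluid case $\mathcal{L}$ does not contain $\omega$ at all. If needed, the dependence on $\omega$ is assumed absent, or handled by the same argument. We are therefore left with
\[
\int_{\mathscr{D}} T_{\alpha\beta}\,\pounds_\xi e^\alpha\wedge\star e^\beta = 0
\]
for all such $\xi$.

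The key step is to rewrite $\pounds_\xi e^\alpha$ through the affine connection $D$. With connection forms ${\omega^\alpha}_\beta$ and torsion $\Theta^\alpha = De^\alpha = de^\alpha+{\omega^\alpha}_\beta\wedge e^\beta$, Cartan's formula gives
\[
\pounds_\xi e^\alpha = d\xi^\alpha + i_\xi de^\alpha = D\xi^\alpha - {\omega^\alpha}_\beta\,\xi^\beta + i_\xi\Theta^\alpha - i_\xi({\omega^\alpha}_\beta\wedge e^\beta),
\]
where $\xi^\alpha=\langle e^\alpha,\xi\rangle$ and $D\xi^\alpha = d\xi^\alpha+{\omega^\alpha}_\beta\xi^\beta$. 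Expanding the last term, the two $\omega$-terms combine into $-\omega^\alpha{}_\beta(e_\gamma)\xi^\gamma e^\beta$, i.e.\ a $\xi$-dependent local $GL$ rotation of the frame. The result is
\[
\pounds_\xi e^\alpha = D\xi^\alpha + i_\xi\Theta^\alpha - (i_\xi\omega^\alpha{}_\beta)e^\beta.
\]

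\emph{This is the main obstacle}: the frame-rotation term is not obviously killed by symmetry of $T$. Its contribution is $-(i_\xi\omega_{\alpha\beta})T^{\alpha\beta}\star1$, up to index placement with $g_{\alpha\beta}$. Only its symmetric part survives, and $\omega_{(\alpha\beta)}$ is tied to non-metricity by $dg_{\alpha\beta} - \omega_{\alpha\beta}-\omega_{\beta\alpha} = -Dg_{\alpha\beta}$. I would therefore try the alternative route of adding the identity obtained from frame-rotation invariance. Concretely, $\mathcal{L}$ depends on the frame only through $g_{\alpha\beta}$ and the $e^\alpha$, so a rotation $\delta e^\alpha = \lambda^\alpha{}_\beta e^\beta$ with compensating $\delta g_{\alpha\beta}$ leaves $\mathcal{L}$ invariant. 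I expect this bookkeeping to be what produces $\tfrac12 T^{\alpha\beta}D_\mu g_{\alpha\beta}$, with the $dg$ terms cancelling against derivatives of $g_{\alpha\beta}$ appearing when lowering indices.

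Next, integrate by parts in $\int T_{\alpha\beta}D\xi^\alpha\wedge\star e^\beta$, using $D\star e^\beta$ expressed via torsion and non-metricity. This gives $-\int \xi^\mu D_\alpha T_\mu{}^\alpha\star1$ plus torsion-trace terms, which combine with $T_{\alpha\beta}\,i_\xi\Theta^\alpha\wedge\star e^\beta = T_\alpha{}^\beta S^\alpha{}_{\beta\mu}\xi^\mu\star1$, with $S$ defined as the torsion components in the paper's convention. Since $\xi$ is arbitrary, the integrand vanishes, which is \eqref{eqn:gen-eq-motion}. I will fix signs by checking the Levi-Civita case, where the identity must reduce to $\nabla_\alpha T_\mu{}^\alpha=0$.
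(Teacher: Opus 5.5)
Your overall route is the paper's: the flow of a compactly supported field, Cartan's formula plus the first structure equation to reach $\pounds_\xi e^\alpha = D\xi^\alpha + i_\xi S^\alpha - \omega^\alpha{}_\beta(\xi)\,e^\beta$, the symmetric part of $\omega$ traded for non-metricity, and an integration by parts. What you flag as the main obstacle, the paper settles in one line by working in an orthonormal frame. There $dg_{\alpha\beta}=0$, so $-T^{\alpha\beta}\omega_{\alpha\beta}(\xi)=\tfrac12\xi^\mu T^{\alpha\beta}D_\mu g_{\alpha\beta}$. Your $GL$-invariance bookkeeping is a valid but unnecessary alternative. In a general frame the flow also drags the component functions, $\delta g_{\alpha\beta}=\xi(g_{\alpha\beta})$, a term missing from your first displayed variation. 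Frame-rotation invariance gives $\partial\mathcal{L}/\partial g_{\alpha\beta}=\tfrac12 T^{\alpha\beta}\star 1$, and this cancels the $dg_{\alpha\beta}$ piece.

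The genuine gap is your last step: the trace terms coming from $D\star e^\beta$ do not combine with the torsion term into $S^\alpha{}_{\beta\mu}T_\alpha{}^\beta$. Since $\xi^\mu\star\theta_\mu=i_V\mathrm{Vol}_g$ with $V^\beta=\xi^\mu T_\mu{}^\beta$, its exterior derivative is the Levi-Civita divergence of $V$. Comparing with $D_\beta V^\beta$ gives
\[
D\star\theta_\mu=\Bigl(D_\alpha T_\mu{}^\alpha-\bigl(S^\alpha{}_{\alpha\beta}-\tfrac12 g^{\rho\sigma}D_\beta g_{\rho\sigma}\bigr)T_\mu{}^\beta\Bigr)\mathrm{Vol}_g .
\]
Meanwhile $T_{\alpha\beta}\,i_\xi S^\alpha\wedge\star e^\beta=\xi^\mu S^\alpha{}_{\mu\beta}T_\alpha{}^\beta\,\mathrm{Vol}_g$, which is minus what you wrote. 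The trace terms have the structure $v_\beta T_\mu{}^\beta$, so your plan, carried out correctly, yields
\[
D_\alpha T_\mu{}^\alpha=\tfrac12 T^{\alpha\beta}D_\mu g_{\alpha\beta}+S^\alpha{}_{\mu\beta}T_\alpha{}^\beta+\bigl(S^\alpha{}_{\alpha\beta}-\tfrac12 g^{\rho\sigma}D_\beta g_{\rho\sigma}\bigr)T_\mu{}^\beta .
\]
This is also what one gets by rewriting $\nabla_\alpha T_\mu{}^\alpha=0$ in terms of the distortion tensor $D-\nabla$; that Levi-Civita conservation law is the actual content of diffeomorphism invariance for an $\omega$-independent $\mathcal{L}$.

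The paper's proof removes these terms only by asserting $d^D\star\theta_\mu=D_\alpha T_\mu{}^\alpha\,\mathrm{Vol}_g$ via normal coordinates. That is justified for the Levi-Civita connection alone: torsion cannot be transformed away at a point, and $D\,\mathrm{Vol}_g\neq0$ when $g^{\rho\sigma}Dg_{\rho\sigma}\neq0$. So no bookkeeping will close your argument for the identity as displayed. For $\mathcal{L}=-\Lambda\star1$ one has $T_\mu{}^\alpha=-\Lambda\delta_\mu{}^\alpha$, which is $D$-parallel, so the left side vanishes. The stated right side, however, equals $-\Lambda\bigl(\tfrac12 g^{\alpha\beta}D_\mu g_{\alpha\beta}+S^\alpha{}_{\alpha\mu}\bigr)$, which is nonzero for any metric connection with nonvanishing torsion trace. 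Your proposed Levi-Civita sign check cannot detect any of this, since every disputed term vanishes there.
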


\begin{proof}
Let $\mathscr{D}$ be an arbitrarily chosen open subset of $\mathscr{M}$, and $X$ a smooth vector field in $\mathscr{M}$ supported in $\mathscr{D}$. Since $X$ is compactly supported, it is complete in $\mathscr{M}$. Moreover, every point outside $\mathscr{D}$ is fixed, and therefore $X$ generates a 1-parameter group of diffeomorphisms $\{f_t\}_{t\in\mathbb{R}}$ such that $f_t(\mathscr{D}) = \mathscr{D}$ for all $t\in\mathbb{R}$, representing a change of coordinates in $\mathscr{D}$. Consequently, due to the principle of general covariance, one obtains the identity 
\begin{align}
    0 = \frac{d}{dt}\bigg\rvert_{t=0} &\int_{f_t(\mathscr{D})}
    \mathcal{L}[e^\alpha,\omega,\Phi]
    \nonumber\\
    & = 
    \int_{\mathscr{D}}\frac{d}{dt}\bigg\rvert_{t=0}
    \mathcal{L}[f_t^*e^\alpha,f_t^*\omega,f_t^*\Phi].
\end{align}
The 1-parameter group $\{f_t\}_{t\in\mathbb{R}}$ induces variations to the matter fields $\Phi$ and the co-frame $\{e^\alpha\}$ given by 
\begin{equation}
    \delta\Phi = \pounds_X \Phi\quad\textrm{and}\quad \delta e^\alpha =\pounds_X e^\alpha,
\end{equation}
respectively. The variations of $\mathcal{L}$ with respect to the matter fields are the equations of motion, and therefore vanish on-shell, resulting in the equation 
\begin{equation}
    \int_{\mathscr{D}} 
    \pounds_X e^\alpha\wedge\star \theta_\alpha = 0, 
    \quad 
    \theta_\alpha:= T_{\alpha\beta}\,e^\beta. 
\end{equation}
Using Cartan's magic formula
\begin{equation}
    \pounds_X = d\circ i_X + i_X\circ d,
\end{equation}
and Cartan's 1st equation 
\begin{equation}
    de^\alpha = S^\alpha - {\omega^\alpha}_\beta\wedge e^\beta,
\end{equation}
where 
\begin{subequations}
\begin{align}
    & S^\alpha(Y,Z) := \langle e^\alpha,D_Y Z - D_Z Y - [Y,Z]\rangle
    \\ \nonumber\\
    & {\omega^\alpha}_\beta(X) := \langle e^\alpha, D_X e_\beta\rangle, 
\end{align}
\end{subequations}
are the \textit{torsion forms} and \textit{connection forms}, respectively, we get 
\begin{align}
    \label{eqn:lie-der-lag}
    \pounds_X e^\alpha\wedge\star\theta & = d(X^\mu\star\theta_\mu) + X^\mu d^D\star\theta_\mu 
    \nonumber\\
     & \quad + (i_X S^\alpha - {\omega^\alpha}_\beta(X)\wedge e^\beta)\wedge\star\theta_\alpha,
\end{align}
where $d^D$ is the absolute exterior derivative of the tensor-valued differential form $A\to A^\mu\star\theta_\mu$. Using a local coordinate frame in $\mathscr{D}$ that is normal at an arbitrarily chosen point in $\mathscr{D}$, together with \eqref{eqn:vol-star-prod}, one verifies that
\begin{equation}
    d^D\star\theta_\mu = D_\alpha {T_\mu}^\alpha\,\textrm{Vol}_g. 
\end{equation}
In addition, using \eqref{eqn:vol-star-prod}, we obtain 
\begin{equation}
    i_X S^\alpha\wedge\star\theta_\alpha = {S^\alpha}_{\mu\beta} X^\mu {T_\alpha}^\beta\,\textrm{Vol}_g,
\end{equation}
where ${S^\alpha}_{\mu\nu} = \langle e^\alpha,S(e_\mu,e_\nu)\rangle$ are the components of the torsion form. Since the identity is tensorial, it may be verified in a local orthonormal frame, \textit{i.e.}, a frame in which
\begin{equation}
    |g(e_\alpha,e_\beta)| = \delta_{\alpha\beta}.
\end{equation}
Then, the components of $Dg$ with respect to the frame $\{e_\alpha\}$ are given by 
\begin{align}
    D_\mu g_{\alpha\beta} & = \partial_\mu(g_{\alpha\beta}) - g(D_\mu e_\alpha,e_\beta) - g(e_\alpha,D_\mu e_\beta) 
    \nonumber\\
    & = - 2g_{\nu(\beta}\,{\omega^\nu}_{\alpha)}(e_\mu), 
\end{align}
and therefore
\begin{equation}
    \omega_{(\alpha\beta)} = -\frac{1}{2}Dg_{\alpha\beta}. 
\end{equation}
On the other hand, we have 
\begin{align}
    {\omega^\alpha}_\beta(X)\wedge e^\beta\wedge\star \theta_\alpha & = T^{\alpha\beta}\omega_{\alpha\beta}(X)\,\textrm{Vol}_g
    \nonumber\\
    & = -\frac{1}{2}X^\mu T^{\alpha\beta} D_\mu g_{\alpha\beta} \,\textrm{Vol}_g.
\end{align}
Finally, given that $X$ vanishes on the boundary $\partial\mathscr{D}$, Stokes' theorem implies 
\begin{equation}
    \int_{\mathscr{D}}X^\mu\bigg(D_\alpha{T_\mu}^\alpha - \frac{1}{2}T^{\alpha\beta} D_\mu g_{\alpha\beta} - {S^\alpha}_{\beta\mu}{T_\alpha}^\beta\bigg)\textrm{Vol}_g = 0.
\end{equation}
This condition is satisfied for every open region $\mathscr{D}\subset\mathscr{M}$ and every smooth vector field $X^\mu$ supported in $\mathscr{D}$, and is therefore equivalent to \eqref{eqn:gen-eq-motion}. 
\end{proof}

\section{Discussion}

We have presented a geometric formulation of the constrained action principle for relativistic perfect fluids. For timelike flows, the construction is equivalent to the Schutz variational principle. It reproduces the conservation of particle number and entropy, the advected Lagrangian labels, the Clebsch decomposition of the velocity, and the usual perfect fluid stress-energy tensor. The main advantage of this formulation, in this case, is that the boundary data of the variational problem are explicit: the action naturally defines a Dirichlet problem for the velocity potentials $\Phi,\Theta$ and the Lagrangian labels $\textrm{a}^i$. 

A second advantage is that the thermodynamic and kinematical variables are kept separate. In many standard formulations, such as \cite{brown1993fluid}, the particle current is taken as the fundamental variable, and one defines
\begin{equation}
    n = \sqrt{-g_\sharp(J,J)},\quad 
    u = \frac{1}{n}J_\sharp.
\end{equation}
This is economical for timelike fluids, but it becomes degenerate when the current is null: the norm of $J$ vanishes, and therefore $u$ can no longer be recovered by dividing $n$. By treating $u$, $n$, and $s$ as independent variables, the present formulation avoids this kinematical degeneracy and allows the same constrained action to be generalised to null flows. 

However, when one imposes the nullity constraint $g(u,u)=0$, the result is not a generic perfect fluid. The equations of motion imply
\begin{equation}
    \rho + P= 0,
\end{equation}
so the enthalpy density must vanish. The corresponding stress-energy tensor is 
\begin{equation}
    T_{\alpha\beta} = Pg_{\alpha\beta} - 
    f u_{\alpha} u_\beta.
\end{equation}
The stress-energy tensor is therefore an intermediate between a cosmological-constant source and a null dust: it contains a metric proportional term fixed by $\rho=-P$, but also a rank-1 null contribution proportional to $u_\alpha u_\beta$. The equation $\nabla_\alpha T^{\alpha\beta} = 0$ then gives 
\begin{equation}
    \pounds_u P = 0, \quad \nabla_\beta P = f\nabla_u u_\beta + \textrm{div}\,(fu)u_\beta. 
\end{equation}
The pressure is therefore constant along the flow, and its gradient measures the obstruction to the congruence being pre-geodesic, in direct analogy with the timelike Euler equation. In particular, if $dP=0$, then the flow lines are pre-geodesics and the remaining dynamical contribution is the null dust term. In this sense, one recovers the action principle for null dust, derived by Bi{\v c}{\'a}k and Kucha{\v r} \cite{Bicak:1997bx}, as a special case.

The thermodynamic interpretation of the null flows should be treated with care. For timelike flows, the thermodynamic variables have their usual local rest frame interpretation, and the First Law of Thermodynamics is applied in that frame. A null flow has no rest frame, and therefore the same interpretation cannot be assumed a priori. Instead, these quantities should be regarded as scalar fields entering the variational principle. Thus, the matter model for null flows is therefore neither a pure cosmological constant, nor pure null dust. It would be interesting to understand whether such a stress-energy tensor can arise from a microscopic model, for example as a coarse-grained description of fields whose excitations select a preferred null direction. This question lies beyond the scope of the present work, but the variational principle derived here provides a useful starting point for such an analysis. 

\section{Acknowledgements}

The author would like to thank Bernhard Schutz and Bruce Allen for helpful discussions.

\appendix 

\renewcommand\theequation{\Alph{section}.\arabic{equation}}

\section*{Appendix}

\section{Construction of Lagrangian coordinates}
\label{sec:lag-coord-proof}

The standard theorems on the existence and uniqueness of solutions to ordinary differential equations (\textit{c.f.} \cite{arnold1992odes}) imply that, for every point $q\in\mathscr{M}$, there is an open neighbourhood $\mathscr{V}\subset\mathscr{M}$ of $q$, $\varepsilon>0$, and a family of local diffeomorphisms $\{\phi_{\tau}\}_{|\tau|<\varepsilon}$ of $\mathscr{V}$ such that 
\begin{equation}
\begin{dcases}
    \frac{\partial}{\partial \tau}\phi_{\tau}(p) = u_{\phi_{\tau}(p)} & p\in\mathscr{V},\quad |\tau|<\varepsilon 
    \\ \\ 
    \phi_0 = \textrm{id}
\end{dcases}
\end{equation}
Moreover, uniqueness implies 
\begin{equation}
    \phi_{\tau+s} = \phi_{\tau}\circ\phi_s,\quad \phi_{\tau}^{-1}=\phi_{-\tau},
\end{equation}
whenever $\tau,s,\tau+s\in(-\varepsilon,\varepsilon)$. Now, fix a point $q\in S$, and choose $S$ to be small enough so that $S\subset\mathscr{V}$. We define $F:(-\varepsilon,\varepsilon)\times S\to\mathscr{M}$ by 
\begin{equation}
    F(\tau,p) = \phi_{\tau}(p),\quad (\tau,p)\in(-\varepsilon,\varepsilon)\times S. 
\end{equation}
We shall show, after a possible shrinking of $\varepsilon$ and $S$, that $F$ is a diffeomorphism between $(-\varepsilon,\varepsilon)\times S$ and a neighbourhood $\mathscr{U}$ of $q$. Let $w\in T_qS$, and $\gamma:(-\delta,\delta)\to S$ be a $C^1$ curve such that $\gamma(0)=q$ and $\dot{\gamma}(0)=w$. Then, the differential of $F$ along $w$ is given by 
\begin{align}
    dF_{(0,q)} w = \frac{\partial}{\partial\tau}\bigg\rvert_{\tau=0}\phi_0(\gamma(\tau))
        = \dot{\gamma}(0) = w. 
\end{align}
Moreover, because
\begin{equation}
    \frac{\partial}{\partial \tau}\bigg\rvert_{\tau=0} F(\tau,q) = 
    \frac{\partial}{\partial \tau}\bigg\rvert_{\tau=0} \phi_{\tau}(q) = u_q,
\end{equation}
and $u_q$ is non-zero and causal -- hence transversal -- to $T_q S$, it follows that $dF_{(0,q)}$ is a linear isomorphism. 

Thus, the inverse function theorem implies that $F$ is a diffeomorphism mapping $(-\varepsilon,\varepsilon)\times S$, for some $\varepsilon>0$ and an open neighbourhood of $q$ in $S$, onto an open neighbourhood $\mathscr{U}\subset\mathscr{M}$. Consequently, for every $p\in\mathscr{U}$, there is a point $p_0\in S$ and a number $\tau(p)\in(-\varepsilon,\varepsilon)$ such that 
\begin{equation}
    F^{-1}(p) = (\tau(p),p_0).
\end{equation}
Without loss of generality, we can assume that $S$ is covered by a local coordinate system $x^1,x^2,x^3$. We define the functions $\tau,\textrm{a}^1,\textrm{a}^2,\textrm{a}^3$ on $\mathscr{U}$ by 
\begin{subequations}
\begin{align}
    & \tau = \pi_1\circ F^{-1}, &&
    \\
    & \textrm{a}^i = x^i\circ\pi_2\circ F^{-1}, && 
    1\leq i\leq 3,
\end{align}    
\end{subequations}
where $\pi_1$ and $\pi_2$ are the projections of $(-\varepsilon,\varepsilon)\times S$ to $(-\varepsilon,\varepsilon)$ and $S$ respectively. The function $(\tau,\textrm{a}^i)$ is clearly a coordinate system, as a composition of diffeomorphisms. Thus, to conclude the proof, we need to show that the maps $\tau$ and $\textrm{a}^i$ satisfy $\pounds_u \tau = 1$ and $\pounds_u\textrm{a}^i = 0$, for all $i\in\{1,2,3\}$. For each $p\in\mathscr{U}$, there is $\delta>0$ such that $\phi_s(q)\in\mathscr{U}$ whenever $|s|<\delta$. Hence, for sufficiently small $s>0$, we have 
\begin{align}
    \tau(\phi_s(p)) & = \pi_1\circ F^{-1}\circ F(s + \tau(p),p_0)
    \nonumber\\
    & = \pi_1(s+\tau(p),p_0)
    \nonumber\\
    & = s + \tau(p),
\end{align}
and therefore 
\begin{equation}
    \pounds_u \tau(p) = \frac{d}{ds}\bigg\rvert_{s=0}\tau(\phi_s(p)) = 1.
\end{equation}
Similarly, for every $i\in\{1,2,3\}$ we have 
\begin{align}
    \textrm{a}^i(\phi_s(p)) & = x^i\circ \pi_2\circ F^{-1}\circ F(s + \tau(p),p_0)
    \nonumber\\
    & = x^i\circ\pi_2(s+\tau(p),p_0) 
    \nonumber\\
    & = x^i(p_0) = \textrm{a}^i(p).
\end{align}
which in turn yields 
\begin{equation}
    \pounds_u\textrm{a}^i(p) = \frac{d}{ds}\bigg\rvert_{s=0}\textrm{a}^i(\phi_s(p)) = 0.
\end{equation}

\hfill$\Box$ 

\section{Variations of differential forms}
\label{sec:tetrad-formalism}

\renewcommand{\thetheorem}{\Alph{section}.\arabic{theorem}}
\setcounter{theorem}{0}

Here, we establish some rules on calculating the variations of general differential forms. Under a variation of the tetrad $\{e^\alpha\}$, the variations of the components of a differential form are not necessarily zero. If $\xi$ is an arbitrary $k$-form on $\mathscr{M}$, its variation $\delta_{\mathbf{e}}\xi$ is given by 
\begin{align}
    \label{eqn:var-k-form}
    \delta_\mathbf{e}\xi = \overline{\delta}_\mathbf{e}\xi + \delta e^{\alpha}\wedge i_{e_\alpha}\xi,
\end{align}
where 
\begin{equation}
    \overline{\delta}_{\mathbf{e}}\xi := \frac{1}{k!}\delta \xi_{\alpha_1...\alpha_k}
    e^{\alpha_1}\wedge ... \wedge e^{\alpha_k}. 
\end{equation}

\begin{lemma}
Let $\mu_{\alpha_1...\alpha_k}$ be the tensor-valued differential forms by 
\begin{equation}
    \mu_{\alpha_1...\alpha_k} = \textrm{Vol}_g(e_{\alpha_1},...,e_{\alpha_k},\cdot). 
\end{equation}
Then: 
\begin{subequations}
\begin{align}
    & \delta_{\mathbf{e}}\mu_{\alpha_1...\alpha_k} = \delta e^{\beta}\wedge\mu_{\alpha_1...\alpha_k\beta}, 
    \label{eqn:var-vol-form}\\
    \nonumber\\
    & \mu^{\alpha_1...\alpha_k} = \star( e^{\alpha_1}\wedge...\wedge e^{\alpha_k}),
    \label{eqn:vol-star-prod}
\end{align}    
\end{subequations}
where indices are raised by the inverse metric tensor.
\end{lemma}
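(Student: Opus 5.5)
The plan is to prove both identities directly from the variation rule \eqref{eqn:var-k-form} together with the component expression of the volume form. I will prove \eqref{eqn:vol-star-prod} first, because it is purely algebraic and does not involve variations.

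\textbf{Hodge dual identity.} Write $\textrm{Vol}_g = \sqrt{|\det g_{\alpha\beta}|}\,e^0\wedge e^1\wedge e^2\wedge e^3$ in the local co-frame. Then the $(n-k)$-form $\mu_{\alpha_1\dots\alpha_k}$ has components
\[
(\mu_{\alpha_1\dots\alpha_k})_{\beta_{k+1}\dots\beta_n} = \sqrt{|\det g|}\,\epsilon_{\alpha_1\dots\alpha_k\beta_{k+1}\dots\beta_n}.
\]
Raising the indices $\alpha_i$ with $g^{\alpha\beta}$ gives exactly the component formula for $\star(e^{\alpha_1}\wedge\dots\wedge e^{\alpha_k})$. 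This holds under the convention $\star\omega = i_{\omega_\sharp}\textrm{Vol}_g$ extended to $k$-vectors, i.e.\ $\alpha\wedge\star\beta = \langle\alpha,\beta\rangle\textrm{Vol}_g$ with the interior products taken in order. The only thing to check is the sign/ordering convention: I will verify it on a single basis element, $k=1$, where $\star e^\alpha = g^{\alpha\beta} i_{e_\beta}\textrm{Vol}_g$. That case is consistent with the identity $\star i_u\delta e^\alpha$ used in the stress-energy proof, and the general case follows by iterating interior products.

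\textbf{Variation formula.} Under $\delta e^\alpha = {f^\alpha}_\beta e^\beta$ with $g_{\alpha\beta}$ fixed as functions, the dual frame varies as $\delta e_\alpha = -{f^\beta}_\alpha e_\beta$. The volume form $\textrm{Vol}_g = \sqrt{|\det g_{\alpha\beta}|}\,e^0\wedge\dots\wedge e^3$ varies only through the co-frame, since the components $g_{\alpha\beta}$ are held fixed ($\overline{\delta}_{\mathbf e}$ of its components vanishes). Hence $\delta\textrm{Vol}_g = {f^\alpha}_\alpha\textrm{Vol}_g$. The components of $\mu_{\alpha_1\dots\alpha_k}$ in the frame, $\sqrt{|\det g|}\,\epsilon_{\alpha_1\dots\alpha_k\beta\dots}$, are constants, so $\overline{\delta}_{\mathbf e}\mu_{\alpha_1\dots\alpha_k}=0$. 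Applying \eqref{eqn:var-k-form} then gives
\[
\delta_{\mathbf e}\mu_{\alpha_1\dots\alpha_k} = \delta e^\beta\wedge i_{e_\beta}\mu_{\alpha_1\dots\alpha_k}.
\]
Finally, I would show that $i_{e_\beta}\mu_{\alpha_1\dots\alpha_k} = \textrm{Vol}_g(e_{\alpha_1},\dots,e_{\alpha_k},e_\beta,\cdot) = \mu_{\alpha_1\dots\alpha_k\beta}$ directly from the definition.

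\textbf{Main obstacle.} The main subtlety is interpretational: fixing precisely what "variation of a tensor-valued form" means. One has to decide whether $\mu_{\alpha_1\dots\alpha_k}$ is varied with the labels $\alpha_i$ held as frame indices, so that $e_{\alpha_i}$ also varies, or as fixed components. I will adopt the reading consistent with \eqref{eqn:var-k-form}: the frame components are fixed numbers, and all variation comes from the co-frame basis. As a cross-check, I will verify the $k=0$ case, $\delta\textrm{Vol}_g = \delta e^\beta\wedge\mu_\beta = {f^\alpha}_\alpha\textrm{Vol}_g$. This matches $\delta_{\mathbf e}\star1 = g_{\alpha\beta}\delta e^\alpha\wedge\star e^\beta$ used earlier, via \eqref{eqn:vol-star-prod}.
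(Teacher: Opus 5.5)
Your proposal is correct and follows essentially the paper's route. You obtain \eqref{eqn:var-vol-form} by applying \eqref{eqn:var-k-form} with $\overline{\delta}_{\mathbf e}\mu_{\alpha_1\dots\alpha_k}=0$ (the frame components $\sqrt{|\det g|}\,\epsilon_{\alpha_1\dots\alpha_n}$ are fixed) and $i_{e_\beta}\mu_{\alpha_1\dots\alpha_k}=\mu_{\alpha_1\dots\alpha_k\beta}$, and \eqref{eqn:vol-star-prod} by comparing with the component formula for $\star$; the paper only writes out the antisymmetrisation and $1/k!$ bookkeeping that you compress into ``raising the indices gives exactly'', and the interpretational issue you flag is harmless, since varying the $e_{\alpha_i}$ inside $\textrm{Vol}_g(e_{\alpha_1},\dots,e_{\alpha_k},\cdot)$ as well yields the same variation.
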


\begin{proof}
Eq. \eqref{eqn:var-vol-form} follows as a direct consequence of \eqref{eqn:var-k-form}. To prove \eqref{eqn:vol-star-prod}, recall that the Hodge star product operator maps every $k$-form $\xi$ to a $(n-k)$-form $\star\xi$, where $n:=\dim\mathscr{M}$, whose components are given by 
\begin{equation}
    (\star\xi)_{\beta_{k+1}...\beta_n} = \frac{1}{k!}\mu^{\beta_1...\beta_n}\xi_{\beta_1...\beta_k}.
\end{equation}
If $\xi = e^{\alpha_1}\wedge...\wedge e^{\alpha_k}$, then 
\begin{equation}
    \xi_{\beta_1...\beta_k} = \sum_{\sigma\in\mathcal{S}_k} \textrm{sgn}\,\sigma\,
    {\delta^{\alpha_1}}_{\beta_{\sigma(1)}} ... {\delta^{\alpha_k}}_{\beta_{\sigma(k)}}.
\end{equation}
Consequently, 
\begin{align}
    \star\xi & = \frac{1}{k!}\sum_{\sigma\in\mathcal{S}_k}
    \textrm{sgn}\,\sigma\,
    {\delta^{\alpha_1}}_{\beta_{\sigma(1)}} ... {\delta^{\alpha_k}}_{\beta_{\sigma(k)}}
    \mu^{\beta_{1}...\beta_k}
    \nonumber\\
    & = \frac{1}{k!}\sum_{\sigma\in\mathcal{S}_k}
    {\delta^{\alpha_1}}_{\beta_{\sigma(1)}} ... {\delta^{\alpha_k}}_{\beta_{\sigma(k)}}
    \mu^{\beta_{\sigma(1)}...\beta_{\sigma(k)}}\nonumber\\
    & = \mu^{\alpha_1...\alpha_k}.
\end{align}
\end{proof}

\begin{lemma}
If $\xi$ is a $k$-form on $\mathscr{M}$, then 
\begin{equation}
    \delta_\mathbf{e}\star\xi = \star\delta_\mathbf{e}\xi + \delta e^\alpha\wedge i_{e_\alpha}\star\xi - \star\left(\delta e^\alpha\wedge i_{e_\alpha}\xi \right). 
    \label{eqn:xi-var-arb}
\end{equation}
In particular, if $\delta_\mathbf{e}\xi = 0$, then 
\begin{equation}
    \delta_\mathbf{e}\star\xi = \delta e^\alpha\wedge i_{e_\alpha}\star\xi - \star\left(\delta e^\alpha\wedge i_{e_\alpha}\xi \right).
    \label{eqn:xi-var-inv}
\end{equation}
\end{lemma}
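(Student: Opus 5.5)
The plan is to apply the general variation rule \eqref{eqn:var-k-form} to the $(n-k)$-form $\star\xi$, and to compare $\delta_{\mathbf{e}}\star\xi$ with $\star$ applied to the variation of $\xi$. The difference between the two will be tracked through the variation of the volume-form coefficients $\mu$.

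First, I expand $\xi$ in the co-frame as $\xi=\frac{1}{k!}\xi_{\alpha_1\dots\alpha_k}e^{\alpha_1}\wedge\dots\wedge e^{\alpha_k}$. By \eqref{eqn:vol-star-prod},
\begin{equation*}
\star\xi=\frac{1}{k!}\,\xi_{\alpha_1\dots\alpha_k}\,g^{\alpha_1\beta_1}\cdots g^{\alpha_k\beta_k}\,\mu_{\beta_1\dots\beta_k}.
\end{equation*}
The tetrad variation then splits into three pieces, corresponding to the three objects in this expression.

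\textbf{Component piece.} This is $\overline{\delta}_{\mathbf{e}}$ acting on the components $\xi_{\alpha_1\dots\alpha_k}$.

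\textbf{Inverse-metric piece.} This comes from varying the factors $g^{\alpha\beta}$, using $\delta g^{\alpha\beta}=-g^{\alpha\gamma}g^{\beta\delta}\delta g_{\gamma\delta}$ together with \eqref{eqn:delta-g-comp}.

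\textbf{Volume piece.} By \eqref{eqn:var-vol-form}, varying $\mu_{\beta_1\dots\beta_k}$ gives $\delta e^\gamma\wedge\mu_{\beta_1\dots\beta_k\gamma}$. Reassembled, this is exactly $\delta e^\gamma\wedge i_{e_\gamma}\star\xi$, up to the sign conventions relating $i_{e_\gamma}$ and the last slot of $\mu$, which I would check in a short computation.

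Next, I rewrite $\delta_{\mathbf{e}}\xi=\overline{\delta}_{\mathbf{e}}\xi+\delta e^\alpha\wedge i_{e_\alpha}\xi$, again by \eqref{eqn:var-k-form}. Applying $\star$ gives
\begin{equation*}
\star\delta_{\mathbf{e}}\xi=\star\overline{\delta}_{\mathbf{e}}\xi+\star(\delta e^\alpha\wedge i_{e_\alpha}\xi).
\end{equation*}
The formula \eqref{eqn:xi-var-arb} therefore reduces to the identity
\begin{equation*}
\star\overline{\delta}_{\mathbf{e}}\xi+\big[\text{inverse-metric piece}\big]=\star\overline{\delta}_{\mathbf{e}}\xi ,
\end{equation*}
that is, to showing that the inverse-metric piece vanishes. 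That is not literally true as stated, so this is where the real work lies.

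The main obstacle is the bookkeeping of what "$\delta$ of the components" means. With the metric components $g_{\alpha\beta}$ themselves varying by \eqref{eqn:delta-g-comp}, the inverse-metric piece is generally nonzero. My first attempt will be to interpret $\overline{\delta}_{\mathbf{e}}$ on $\star\xi$ consistently: the components of $\star\xi$ relative to the frame are $\frac{1}{k!}\xi^{\beta_1\dots\beta_k}\mu_{\beta_1\dots\beta_k\gamma\dots}$, and their variation at fixed frame indices includes the $\delta g^{\alpha\beta}$ terms. I would then show that these combine with the $\overline{\delta}$-contributions hidden in $\star\overline{\delta}\xi$.

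If that fails, I would fall back on the linearity of both sides in $\xi$ and in $\delta e$. It then suffices to verify the identity for $\xi=e^{\alpha_1}\wedge\dots\wedge e^{\alpha_k}$ and a single elementary variation $\delta e^\alpha=f^\alpha{}_\beta e^\beta$, working in an orthonormal frame where $\star$ of a basis monomial is explicit. Separating $f$ into its symmetric part (which changes $g_{\alpha\beta}$) and its antisymmetric part (a local Lorentz rotation, under which both sides vanish by covariance), I only need a diagonal rescaling $f^\alpha{}_\beta=\lambda\delta^\alpha_\beta\delta^\alpha_\gamma$ and a symmetric off-diagonal shear, each a finite check.

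The special case \eqref{eqn:xi-var-inv} follows immediately by setting $\delta_{\mathbf{e}}\xi=0$.
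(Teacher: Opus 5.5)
There is a genuine gap. Your decomposition $\star\xi=\tfrac{1}{k!}\xi_{\alpha_1\ldots\alpha_k}\mu^{\alpha_1\ldots\alpha_k}$ follows the paper's route. So does your reduction of \eqref{eqn:xi-var-arb} to the vanishing of the ``inverse-metric piece''. But you then declare that piece generally nonzero and leave it as ``the real work'', so the proof is never closed.

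The missing point is what the co-frame variation holds fixed. In Section~\ref{sec:stress-energy} the frame components $g_{\alpha\beta}=g(e_\alpha,e_\beta)$ are kept fixed, and the metric changes only through $g=g_{\alpha\beta}e^\alpha\otimes e^\beta$. The notation in \eqref{eqn:delta-g-comp} invites your reading, but those $\delta g_{\alpha\beta}$ are the components of the tensor $\delta_{\mathbf e}g$ relative to the unvaried co-frame. They are not the variation of $g(e_\alpha,e_\beta)$. Indeed, since $\delta e_\alpha=-f^\mu{}_\alpha e_\mu$, you get $\delta\big(g(e_\alpha,e_\beta)\big)=(\delta_{\mathbf e}g)(e_\alpha,e_\beta)+g(\delta e_\alpha,e_\beta)+g(e_\alpha,\delta e_\beta)=0$.

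Hence $g^{\alpha\beta}$ is inert, raising frame indices commutes with $\delta_{\mathbf e}$, and your inverse-metric piece vanishes identically. This is exactly what the paper uses when it writes $\xi_{\alpha_1\ldots\alpha_k}\delta\mu^{\alpha_1\ldots\alpha_k}=\xi^{\alpha_1\ldots\alpha_k}\delta\mu_{\alpha_1\ldots\alpha_k}$. The proof then closes at once: $\delta_{\mathbf e}\star\xi=\star\overline{\delta}_{\mathbf e}\xi+\delta e^\gamma\wedge i_{e_\gamma}\star\xi$, followed by substituting \eqref{eqn:var-k-form}. The sign you wanted to check is also trivial: $i_{e_\gamma}\mu_{\beta_1\ldots\beta_k}=\mu_{\beta_1\ldots\beta_k\gamma}$ follows directly from $\mu_{\beta_1\ldots\beta_k}=\textrm{Vol}_g(e_{\beta_1},\ldots,e_{\beta_k},\cdot)$.

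Your reading is also internally inconsistent, and under it neither remedy could succeed. The rule \eqref{eqn:var-vol-form}, which you use for the volume piece, holds only because the frame components $\sqrt{|\det g_{\alpha\beta}|}\,\epsilon_{\alpha_1\ldots\alpha_n}$ of $\textrm{Vol}_g$ are fixed. If $g_{\alpha\beta}$ varied as well, you would have to add a $\tfrac12 g^{\gamma\delta}\delta g_{\gamma\delta}$ term there too. The change of $g$ would then be counted twice, once through $\delta e^\alpha$ and once through $\delta g_{\alpha\beta}$.

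With that double counting the lemma is false, so your first remedy has nothing to work with. By definition $\overline{\delta}_{\mathbf e}\xi$ is the variation of the lower-index components and contains no $\delta g$ terms to cancel against. Your basis-monomial check would produce a counterexample rather than a confirmation. For example, in an orthonormal frame take $\xi=e^1$ with fixed components, $\delta e^1=\lambda e^1$, and all other $\delta e^\alpha=0$. With $g_{\alpha\beta}$ fixed, both sides of \eqref{eqn:xi-var-arb} vanish. Your $\delta g^{11}=-2\lambda$ would instead add $-2\lambda\star e^1$ to the left-hand side only.

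A smaller slip: for antisymmetric $f^\alpha{}_\beta$ the two sides do not vanish in general. Both reduce to $\star\delta_{\mathbf e}\xi$.
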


\begin{proof}
For every $k$-form $\xi$ on $\mathscr{M}$, we have 
\begin{align}
    \delta_\mathbf{e}\star\xi & = \delta_\mathbf{e}\left(\frac{1}{k!}\xi_{\alpha_1...\alpha_k}\mu^{\alpha_1...\alpha_k}\right)
    \nonumber\\
    & = \frac{1}{k!}\left(\delta_\mathbf{e}\xi_{\alpha_1...\alpha_k} + \xi^{\alpha_1...\alpha_k}\delta\mu_{\alpha_1...\alpha_k}\right)
    \nonumber\\
    & = \star\overline{\delta}_\mathbf{e}\xi + \frac{1}{k!}\xi^{\alpha_1...\alpha_k}\delta e^\beta\wedge\mu_{\alpha_1...\alpha_k\beta} 
    \nonumber\\
    & = \star\overline{\delta}_\mathbf{e}\xi + \delta e^\alpha\wedge(i_{e_\alpha}\star\xi).
    \label{eqn:arb-var-proof}
\end{align}
Eq. \eqref{eqn:xi-var-arb} then follows by substituting eq. \eqref{eqn:var-k-form} to \eqref{eqn:arb-var-proof}.
\end{proof}

%\bibliographystyle{apsrev4-2}
%\bibliography{bibliography}

%apsrev4-2.bst 2019-01-14 (MD) hand-edited version of apsrev4-1.bst
%Control: key (0)
%Control: author (72) initials jnrlst
%Control: editor formatted (1) identically to author
%Control: production of article title (-1) disabled
%Control: page (0) single
%Control: year (1) truncated
%Control: production of eprint (0) enabled
%

%\raggedbottom

\end{document}